\DeclareMathOperator*{\argmin}{argmin}
\newcommand{\R}{{\rm I\!R}}
\newcommand{\B}{\mathbb{B}}
\newcommand{\N}{{\rm I\!N}}
\newcommand{\cO}{\mathcal{O}}
\newcommand{\cX}{\mathcal{X}}
\newcommand{\cE}{\mathcal{E}}
\newcommand{\cN}{\mathcal{N}}
\newcommand{\cT}{\mathcal{T}}
\newcommand{\cK}{\mathcal{K}}
\newcommand{\cL}{\mathcal{L}}
\newcommand{\cV}{\mathcal{V}}
\newcommand{\cR}{\mathcal{R}}
\newcommand{\cD}{\mathcal{D}}
\newcommand{\cG}{\mathcal{G}}
\newcommand{\diag}{\operatorname{diag}}
\newtheorem{lemma}{Lemma}
\newtheorem{definition}{Definition}
\newtheorem{theorem}{Theorem}
\newtheorem{proposition}{Proposition}
\newtheorem{remark}{Remark}
\newcommand{\prb}[1]{\Pr\left\{#1\right\}}
\newcommand{\quan}[2]{Q_{#2}{\left(#1\right)}}
\pgfplotsset{compat = 1.15} 
\title{%
\normalsize Accepted for presentation at the 64th IEEE Conference on Decision and Control,  Brazil, December 10-12, 2025\\[0.6em]
\LARGE \bf
Conformal Data-driven Control of Stochastic Multi-Agent Systems under Collaborative Signal Temporal Logic Specifications
}
\author{Eleftherios E. Vlahakis$^1$, Lars Lindemann$^2$ and Dimos V. Dimarogonas$^1$
\thanks{This work was supported by the Swedish
Research Council (VR), the Knut \& Alice Wallenberg Foundation (KAW), the Horizon Europe Grant SymAware and the ERC Consolidator Grant LEAFHOUND. $^1$Division of Decision and Control Systems, School of Electrical Engineering and Computer Science, KTH Royal Institute of Technology, 10044, Stockholm, Sweden. Email: {\tt\small\{vlahakis,dimos\}@kth.se}. $^2$Automatic Control Laboratory, ETH Zürich,  Zürich, 8092, Switzerland. Email: {\tt\small llindemann@ethz.ch}. 
}%
}
\begin{document}

\maketitle
\thispagestyle{empty}
\pagestyle{empty}

\begin{abstract}
We address control synthesis of stochastic discrete-time linear multi-agent systems under jointly chance-constrained collaborative signal temporal logic specifications in a distribution-free manner using available disturbance samples, which are partitioned into training and calibration sets. Leveraging linearity, we decompose each agent’s system into deterministic nominal and stochastic error parts, and design disturbance feedback controllers to bound the stochastic errors by solving a tractable optimization problem over the training data. We then quantify prediction regions (PRs) for the aggregate error trajectories corresponding to agent \textit{cliques}, involved in collaborative tasks, using conformal prediction and calibration data. This enables us to address the specified joint chance constraint via Lipschitz tightening and the computed PRs, and relax the centralized stochastic optimal control problem to a deterministic one, whose solution provides the feedforward inputs. To enhance scalability, we decompose the deterministic problem into agent-level subproblems solved in an MPC fashion, yielding a distributed control policy. Finally, we present an illustrative example and a comparison with \cite{VlahakisCDC24}.
\end{abstract}


\section{Introduction}
\label{sec:introduction}
Multi-agent systems (MAS) arise in various applications, including robotics and cyber-physical systems, where multiple agents collaborate to accomplish tasks jointly. We use signal temporal logic (STL) to define such specifications \cite{Sun2022}, leveraging Boolean and temporal operators for expressing precisely spatio-temporal constraints \cite{MalerSTL2004, Donze2010}. Under stochastic uncertainty, STL specifications are typically formulated as chance constraints, making STL control synthesis challenging, as chance-constrained problems are generally nonconvex and intractable \cite{FarahaniTAC2019}. Existing methods rely on constraint tightening \cite{FarahaniTAC2019, Safaoui2020, LindemannTAC2022} or analytic techniques \cite{Sadigh2016, Sadigh2018} to provide probabilistic guarantees. These approaches may be limited to Gaussian settings \cite{Sadigh2016}, computationally expensive \cite{Safaoui2020}, or rely on  concentration inequalities and union bound \cite{VlahakisCDC24}, making them unsuitable for general MAS. In this work, we propose a tractable data-driven approach for STL synthesis of stochastic MAS under individual and collaborative STL tasks with probabilistic guarantees.

Data-driven methods offer flexibility in relaxing chance constraints by leveraging available samples and providing guarantees through statistical tools such as conformal prediction (CP) \cite{vovk2005algorithmic}. Using a calibration dataset of \textit{i.i.d.} samples, CP enables the construction of prediction regions for a test sample with a specified probability, without requiring knowledge of the underlying distribution. CP has recently been applied in control \cite{lindemann2024CPsurvey}, including STL-based runtime verification for single-agent schemes \cite{Xin2022, LindemannICCPS23}. 
Single-agent STL control synthesis in the presence of uncontrollable agents has been explored in \cite{Yu2024}, providing probabilistic guarantees via CP, while a multi-agent reinforcement learning problem is studied in \cite{Kuipers2024} using CP, albeit without provable assurances.

Here, we study control synthesis for stochastic MAS under collaborative STL specifications, formulated as a chance-constrained optimization. We assume: 1) collaborative tasks are assigned to arbitrary groups of agents, called \textit{cliques}; 2) a joint chance constraint is imposed across all cliques with a desired probability level $1-\theta$; and 3) disturbance trajectory samples are available for each agent (Sec. \ref{sec:Prob_setup}).

Assuming linear agent dynamics, we decompose each system into a deterministic nominal part driven by a feedforward term and a stochastic error part regulated by a disturbance-feedback (DF) controller. To design DF, we introduce \textit{nonconformity scores}—functions of aggregate error trajectories parameterized by DF gains—that capture deviations from nominal behavior within each clique and return the maximum deviation across cliques. The DF gains are then obtained by minimizing a CVaR-based cost of the empirical distribution of these scores, computed from training samples (Sec. \ref{sec:training}).

Next, we quantify prediction regions (PRs) for clique-level aggregate error trajectories by computing the $(1-\theta)$th quantile of the empirical distribution of the proposed scores evaluated using the remaining (calibration) samples. This provides joint probabilistic guarantees across cliques via CP (Sec. \ref{sec:calibration}) and enables us to handle the joint chance constraint through Lipschitz tightening and the computed PRs. Consequently, the original centralized chance-constrained problem can be relaxed to a deterministic one, whose solution yields the agents' feedforward input sequences. To improve scalability, we decompose the deterministic problem into agent-level subproblems, solved in an MPC fashion with a shrinking horizon, leading to a distributed controller with feedback and feedforward elements. The method is illustrated and compared to our previous work \cite{VlahakisCDC24} (Secs.~\ref{sec:stl_control_synthesis} and \ref{sec:example}).

\section{Problem setup}\label{sec:Prob_setup}

\subsection{Notation and Preliminaries}

The sets of real numbers and nonnegative integers are $\R$ and $\N$, 
respectively. Let $N,M\in \N$. Then, $\N_{[0,N]}=\{0,1,\ldots,N\}$.  
Let $x_{1},\ldots,x_{n}$ be vectors. 
Then, $x=(x_{1},\ldots,x_{n}) = [x_{1}^\intercal  \;\cdots\;x_{n}^\intercal  ]^\intercal$. 
We denote by $\bm{x}(a:b)=(x(a),\ldots,x(b))$ an aggregate vector consisting of $x(t)$, $t\in \N_{[a,b]}$, representing a trajectory. When $x(t)$, $t\in\N_{[a,b]}$, are random vectors, $\bm{x}(a:b)=(x(a),\ldots,x(b))$ is a random process. Let $x_i(t)$, for $t\in\N_{[0,N]}$ and $i\in\N_{[1,M]}$. Then, $\bm{x}(0:N)=(x(0),\ldots,x(N))$ denotes an aggregate trajectory when 
$x(t)=(x_1(t),\ldots,x_M(t))$, $t\in\N_{[0,N]}$. A random variable $w$ following a distribution $\mathscr{D}_w$ is denoted as 
$w \sim\mathscr{D}_w$ and the expected value of 
$w$ is $\mathbb{E}(w)$. The probability of event $Y$ is $\mathrm{Pr}\{Y\}$. 
$\quan{\mathscr{D}}{\delta}$ is the $\delta$-th quantile of a distribution $\mathscr D$, i.e., for $Z \sim \mathscr D$, $\quan{\mathscr{D}}{\delta}=\inf\{z:\prb{Z\leq z}\geq \delta\}$. 
The ceiling operator is $\lceil{}\cdot{}\rceil$. The cardinality of a set $\cV$ is $|\cV|$. The symbol $\otimes$ denotes the Kronecker product.

\subsubsection{Conformal Prediction}
Let $\cR^{(0)},\ldots,\cR^{(k)}$ be \textit{i.i.d.} random variables. We refer to $\cR^{(\varsigma)}$, $\varsigma {\in} \N_{[0,k]}$, as \textit{nonconformity scores}. Given $\theta{\in}(0,1)$, an upper bound $q\in\R$ for $\cR^{(0)}$, which we call a test point, can be obtained as follows. 
\begin{lemma}{\cite[Lemma 1]{TibshiraniNeurIPS2019}}\label{lemma:quantile_lemma}
    If \(\cR^{(0)},\ldots,\cR^{(k)}\) are i.i.d. random variables, then for any $\theta\in(0,1)$, we have 
    \begin{equation}\label{eq:quantile_lemma}
        \prb{{\cR^{(0)}}\leq {\quan{{\cR^{(1)},\ldots,\cR^{(k)},\infty}}{{1}{-}\theta}}}\geq {1}-{\theta.}
    \end{equation}
\end{lemma}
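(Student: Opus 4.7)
The plan is to exploit the exchangeability of the i.i.d. scores $\cR^{(0)}, \cR^{(1)}, \ldots, \cR^{(k)}$ and reduce the probabilistic claim to a combinatorial counting argument on ranks.

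First, I would introduce the order statistics $\cR_{(1)} \leq \cR_{(2)} \leq \cdots \leq \cR_{(k)}$ of the calibration set. With $p = \lceil (k+1)(1-\theta) \rceil$, the empirical $(1-\theta)$-quantile $\quan{\cR^{(1)},\ldots,\cR^{(k)},\infty}{1-\theta}$ equals $\cR_{(p)}$ when $p \leq k$ and $+\infty$ when $p = k+1$; in the latter case \eqref{eq:quantile_lemma} is vacuous, so it suffices to treat $p \leq k$ and show $\prb{\cR^{(0)} \leq \cR_{(p)}} \geq 1-\theta$.

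Next, I would re-express this event in terms of the rank of the test point $\cR^{(0)}$ in the augmented sample $\{\cR^{(0)}, \cR^{(1)}, \ldots, \cR^{(k)}\}$. Assuming no ties, $\cR^{(0)} \leq \cR_{(p)}$ is equivalent to this rank being at most $p$. Since the scores are i.i.d., they are exchangeable, so permuting the indices $\{0,1,\ldots,k\}$ leaves the joint distribution invariant; consequently the rank of $\cR^{(0)}$ is uniformly distributed over $\{1, 2, \ldots, k+1\}$. This gives
\[
\prb{\cR^{(0)} \leq \cR_{(p)}} \;=\; \frac{p}{k+1} \;=\; \frac{\lceil (k+1)(1-\theta) \rceil}{k+1} \;\geq\; 1-\theta,
\]
which is the desired bound.

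Finally, the tie case must be dispatched. Ties can only increase the number of calibration values at most $\cR^{(0)}$, so the probability above becomes a lower bound rather than an equality and the inequality is preserved. A clean way to formalize this is to break ties by appending independent $\mathrm{Uniform}[0,1]$ auxiliary variables to each score; this preserves exchangeability and reduces the problem to the continuous, tie-free case already handled. I expect this tie-handling step to be the only subtle point: the exchangeability-to-uniform-rank argument is classical once one commits to it, and the remaining work is merely identifying the $p$-th order statistic via the ceiling identity $p = \lceil (k+1)(1-\theta) \rceil$ and checking the trivial $p = k+1$ corner case.
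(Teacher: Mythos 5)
Your proof is correct, and it is essentially the canonical argument for this result: the paper does not prove the lemma itself but cites it from Tibshirani et al., where the same exchangeability/uniform-rank reasoning is used. Your reduction of $\prb{\cR^{(0)}\leq \cR_{(p)}}$ with $p=\lceil (k+1)(1-\theta)\rceil$ to the statement that the rank of $\cR^{(0)}$ in the augmented sample is uniform on $\{1,\ldots,k+1\}$, together with the vacuous $p=k+1$ case and the lexicographic tie-breaking via auxiliary uniforms, is exactly the standard route and contains no gaps.
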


\begin{remark}\label{rem:CP}
    (i) Assuming $\cR^{(1)}\leq \cdots \leq \cR^{(k)}$, one can compute $\quan{\cR^{(1)},\ldots,\cR^{(k)},\infty}{1-\theta}$ as the $p$th smallest nonconformity score $\cR^{(p)}$, where $p=\lceil(k+1)(1-\theta)\rceil$. $\cR^{(p)}$ is finite or $p\in\N_{[1,k]}$ if $k\geq \lceil(k+1)(1-\theta)\rceil$. 
    (ii) 
    The probability \eqref{eq:quantile_lemma} is defined over the randomness in the test and calibration draws. 
    Conditional coverage guarantees of the form $\Pr\{\cR^{(0)} \leq C \mid \cR^{(1)}, \ldots, \cR^{(k)}\}$ are in general difficult to obtain.  
    Notably, probably approximately correct coverage guarantees
        $\mathrm{Pr}_c\{\mathrm{Pr}\{\cR^{(0)}\leq \quan{\cR^{(1)},\ldots,\cR^{(k)},\infty}{1-\hat{\theta}}\}\geq 1-\theta\}\geq 1-\beta$ can be obtained by setting $\hat{\theta}=\theta-\sqrt{\frac{\ln{(\sfrac{1}{\beta})}}{2k}}$, with the ``outer" probability $\mathrm{Pr}_c$ taken over the randomness in the calibration draw $\cR^{(1)},\ldots,\cR^{(k)}$, and $\beta\in(0,1)$ \cite{Vovk2012}.
\end{remark}

\subsubsection{Conditional value at risk (CVaR)}
For a random variable $\cR\sim \mathscr D$ and confidence level $(1-\theta)$,  
$\mathrm{CVaR}_{1-\theta}(\cR)$ is the expected value of $\cR$ in the $\theta$-tail exceeding $\quan{\mathscr D}{1-\theta}$, i.e., $\mathbb{E}\left(\cR\, |\, \quan{\mathscr D}{1-\theta}\leq \cR\right)$.  
CVaR is a coherent risk measure that meets essential criteria such as convexity and monotonicity. It is typically optimized using standard convex and linear programming techniques, and can be formulated as:
\begin{equation}\label{eq:CVaR:opt}
       \mathrm{CVaR}_{1-\theta}(\cR)=\min_{\eta\in \R}\mathbb{E}\left(\eta+\frac{1}{\theta}(\cR-\eta)_+ \right),
\end{equation}
where $(\cdot)_+=\max\{0,\cdot\}$~\cite{Rockafellar2000CVaR}. 

\subsubsection{Signal temporal logic}
We consider STL formulas with standard syntax
\begin{equation}\label{eq:STL_syntax}
    \varphi 
    {}\coloneqq{} 
    \top 
    {}\mid{}
    \pi
    {}\mid{}
    \lnot \phi  
    {}\mid{}
    \phi_1 \wedge \phi_2 
    {}\mid{}
    \phi_1 U_{[t_1,t_2]}\phi_2,
\end{equation}
where $\pi:=(\mu(x)\geq 0)$ is a predicate, $\mu(x):\R^{n_x}\to \R$ is a predicate function of $x\in\R^{n_x}$, and $\phi$, $\phi_1$, and $\phi_2$ are STL formulas, 
which are built recursively using predicates $\pi$, logical operators $\neg$ and $\wedge$, and the 
\textit{until} temporal operator $U$, with $[t_1,t_2]\equiv \N_{[t_1,t_2]}$. 
We omit $\lor$ (\textit{or}), $\lozenge$ (\textit{eventually}) and $\square$ (\textit{always}) 
operators from \eqref{eq:STL_syntax} and the sequel, as these may be defined by \eqref{eq:STL_syntax}, 
e.g., $\phi_1\lor \phi_2 = \neg (\neg \phi_1 \wedge \neg \phi_2)$, 
$\lozenge_{[t_1,t_2]}\phi = \top U_{[t_1,t_2]}\phi$, 
and $\square_{[t_1,t_2]}\phi = \lnot \lozenge_{[t_1,t_2]}\lnot \phi$. 

We denote by $\bm{x}(t) \models \phi$, $t\in\N$, the satisfaction 
of $\phi$, verified over $\bm{x}(t)=(x(t),x(t+1),\ldots)$. 
The validity of a formula $\phi$ can be verified using Boolean or quantitative semantics: $\bm{x}(t) \models \phi {\iff} \rho^{\phi}(\bm{x}(t))> 0$, where the \textit{robustness function} $\rho^\phi: \R^n\times\cdots\times \R^n \to \R$ quantifies how robustly a signal $\bm{x}(t)$ satisfies a formula $\phi$. Based on the Boolean semantics of STL, the horizon of a formula $\phi$ is roughly the minimum length of a signal that is required to verify $\phi$. We refer to \cite{MalerSTL2004} and \cite{Donze2010} for detailed descriptions of the Boolean and quantitative semantics of STL.

To facilitate the definition of joint STL formulas and to represent groups of agents involved in collaborative tasks, we adopt the concept of \textit{cliques} from graph theory. Let $\cG = (\cV, \cE)$ be an undirected graph, potentially with self-loops and multiple edges, with node set $\cV$ of cardinality $M = |\cV|$ and edge set $\cE$. Let $\nu \subseteq \cV$ with $|\nu| \geq 1$, and define $\cE_{\nu} \subseteq \cE$ as the set of edges connecting the nodes in $\nu$. Then, $(\nu, \cE_{\nu})$ is called a \textit{clique} if it forms a complete subgraph of $\cG$~\cite{Orlin1977}.  


Consider a graph $\cG = (\cV,\cE)$ with clique set $\cK$. For simplicity, and with a slight abuse of notation, we denote a clique $(\nu,\cE_\nu)\in\cK$ simply by $\nu$. Let $\nu\in\cK$, with $\nu=(i_1,\ldots,i_{|\nu|})$, and consider vectors $x_{i_j}(t)$, $j\in\N_{[1,|\nu|]}$, with $t\in\N$, Then, $x_\nu(t)=(x_{i_1}(t),\ldots,x_{i_{|\nu|}}(t))$ is an aggregate vector, and $\bm{x}_\nu(t)=(x_\nu(t),x_\nu(t+1),\ldots)$ is an aggregate trajectory. We denote the satisfaction of a collaborative task $\phi_\nu$ defined over $\bm{x}_\nu(t)$ by $\bm{x}_\nu(t)\models \phi_\nu$, or equivalently by $\rho^{\phi_\nu}(\bm{x}_\nu(t))> 0$. 

\subsection{Multi-agent system}\label{sec:MAS}

\subsubsection{Dynamics} We consider a MAS with $M$ agents, with the $i$th agent following the dynamics
\begin{equation}\label{eq:individual_agent_dynamics}
    x_i(t+1)=A_ix_i(t) + B_iu_i(t) + w_i(t),
\end{equation}
where $x_i(t)\in \R^{n_i}$, $u_i\in\R^{m_i}$, and $w_i(t)\in \R^{n_i}$ are the state, input, and disturbance vectors, respectively, $t\in\N$, the initial condition, $x_i(0)$, is known, $A_i\in\R^{n_i\times n_i}$, $B_i\in\R^{n_i\times m_i}$, with the index $i\in\cV=\{1,\ldots,M\}$. Recall a clique $\nu=(i_1,\ldots,i_{|\nu|})$, where $i_j\in\cV$, with $j\in\N_{[1,|\nu|]}$. By collecting individual state, input, and disturbance vectors, as 
$x_\nu(t)=(x_{i_1}(t),\ldots,x_{i_{|\nu|}}(t))\in \R^{n_\nu}$, 
$u_\nu(t)=(u_{i_1}(t),\ldots,u_{i_{|\nu|}}(t))\in\R^{m_\nu}$, 
and $w_\nu(t)=(w_{i_1}(t),\ldots,w_{i_{|\nu|}}(t))\in \R^{n_\nu}$, respectively, we write the aggregate dynamics of $|\nu|$ agents as
\begin{equation}\label{eq:clique_dynamics}
    x_\nu(t+1) = A_\nu x_\nu(t)+B_\nu u_\nu(t)+w_\nu(t),
\end{equation}
\sloppy
with $A_\nu=\diag(A_{i_1}, \ldots ,A_{i_{|\nu|}})$, $B_\nu=\diag(B_{i_1}, \ldots, B_{i_{|\nu|}})$. When $\nu=(1,\ldots,M)$, the aggregate dynamics of the entire MAS are written as
\begin{equation}\label{eq:MAS}
    x(t+1)=Ax(t)+Bu(t)+w(t),
\end{equation}
where $A=\diag(A_1, \ldots ,A_M)$ and $B=\diag(B_1, \ldots, B_M)$. 

\subsubsection{STL specification} 
The MAS is subject to  
\begin{equation}
    \phi = \bigwedge_{\nu \in \cK_\phi} \phi_\nu, \label{eq:global_phi}    
\end{equation}
which is a conjunctive STL formula, where each conjunct \(\phi_\nu\) defined over $\bm{x}_\nu(t)$ follows the syntax in \eqref{eq:STL_syntax} and represents a formula that involves the clique $\nu$, indicating all agents in $\nu$ can interact with each other. The set \(\cK_\phi\) collects all these cliques induced by $\phi$, and may include individual agents $(|\nu|=1)$ or group of agents $(1<|\nu|\leq |\cV|)$.   
Let \(\pi \coloneqq (\mu(y) \geq 0)\) be a predicate in \(\phi\), where \(\mu(y): \R^{n_y} \to \R\). The vector \(y \in \R^{n_y}\) can represent either an individual state vector \(x_i \in \R^{n_i}\), where \(i \in \cV\), or an aggregate state vector \(x_\nu \in \R^{n_\nu}\) that collects the states of agents in clique \(\nu \in \cK_\phi\).

\subsubsection{Disturbance} 
Let \(\bm{w}_i(0{:}N{-}1)=\{w_i(0), w_i(1), \dots, \allowbreak  w_i(N{-}1)\}\) denote the disturbance sequence for the $i$th agent. We assume that the joint distribution $\mathscr{D}_{w_i}$ of the random vectors $w_i(t){\in} \R^{n_i}$, $t{\in}\N_{[0,N-1]}$, is unknown, and instead that a disturbance dataset, \( \cD^{w_i} = \{ \bm{w}_i^{(0)}, \ldots, \bm{w}_i^{(k)} \} \), of \( k+1 \) samples is available, with $\bm{w}_i^{(\varsigma)}=(w_i^{(\varsigma)}(0),\ldots,w_i^{(\varsigma)}(N{-}1))$, $\varsigma\in\N_{[0,k]}$. We assume that the disturbance sequence samples in $\cD^{w_i}$ are i.i.d., where each sample represents one realization of the process.  
Note that although $\bm{w}_i^{(\varsigma)}$, $\varsigma \in\N_{[0,k]}$, are assumed to be independent, the random vectors $w_i(t)$, $t\in\N_{[0,N-1]}$, may be correlated across time for the $i$th agent. We partition $\cD^{w_i}$ into training and calibration sets: $\cD^{w_i}_{\mathrm{train}} = \{\bm{w}_i^{(k_1+1)}, \ldots, \bm{w}_i^{(k)}\}$ and $\cD^{w_i}_{\mathrm{cal}} = \{\bm{w}_i^{(1)}, \ldots, \bm{w}_i^{(k_1)}\}$. We also use the grouped datasets $\cD^{w_\nu}_{\mathrm{train}}$ and $\cD^{w_\nu}_{\mathrm{cal}}$ over cliques.

\subsection{Problem statement}

We wish to solve the  stochastic optimal control problem
\begin{align}
   &\operatorname*{Min.}_{\substack{\bm{u}(0{:}N{-}1)\\ \bm{x}(0{:}N)}}
        \mathbb{E}
        \left(
            \sum_{i=1}^M\big(\sum_{t=0}^{N-1}(\ell_i(x_i(t), u_i(t))) + V_{f,i}(x_i(N))\big)
        \right) \nonumber \\
     &\mathrm{s.t.~}  x(t+1) = Ax(t)+Bu(t)+w(t),\; t\in\N_{[0,N)}, \nonumber \\
        &\;\;\;\;\;\; \prb{\bm{x}(0:N) \models \phi}\geq 1- \theta, \; \mathrm{with}\; x(0)=x_0, \label{eq:multi_agent_problem} 
\end{align}    
where $\ell_i:\R^{n_i}\times \R^{m_i}\to \R$, $V_{f,i}:\R^{n_i}\to \R$, the optimization variables are $\bm{u}(0{:}N{-}1)=(u(0),\ldots,u(N-1))$, $\bm{x}(0{:}N)=(x(0),\ldots,x(N))$, with $u(t)=(u_1(t),\ldots,u_M(t))$, $t\in\N_{[0,N-1]}$, and $x(t)=(x_1(t),\ldots,x_M(t))$, $t\in\N_{[0,N]}$, respectively, $\phi$ is the STL formula in \eqref{eq:global_phi}, to be satisfied by $\bm{x}(0{:}N)$ with a probability $1-\theta$, with $\theta\in(0,1)$, $x_0$ is a known initial condition, and $N$ is the horizon of $\phi$.  
Solving the problem directly is challenging due to 1) the (joint) probabilistic constraint in \eqref{eq:multi_agent_problem}, equivalently written as
\begin{equation}\label{eq:clique_chance_STL_constraints}
    \prb{\bm{x}_\nu(0:N)\models \phi_\nu,\; \forall \nu\in\cK_\phi}\geq 1-\theta,
\end{equation}
2) the lack of knowledge about the distribution of the disturbance $w(t)$, and 3) the growing computational complexity with the number of agents. We address challenges 1) and 2) next via a data-driven approach, and 3) thereafter.


\section{Data-driven uncertainty quantification and feedback design}\label{sec:main_results}

\subsection{Decomposition of dynamics}
Due to the linearity in \eqref{eq:individual_agent_dynamics}, the state of the $i$th agent can be decomposed into a deterministic part, $z_i(t)$, and an error term, $e_i(t)$, i.e., $x_i(t) = z_i(t) + e_i(t)$, with initial conditions $z_i(0) = x_i(0)$ and $e_i(0) = 0$. Consider a clique $\nu \in \cK_\phi$, where $\nu = (i_1, \ldots, i_{|\nu|})$, and define the aggregate vectors $z_\nu(t) = (z_{i_1}(t), \ldots, z_{i_{|\nu|}}(t))$ and $e_\nu(t) = (e_{i_1}(t), \ldots, e_{i_{|\nu|}}(t))$. We consider a control policy comprising disturbance-feedback \cite{Goulart2006} and feedforward elements, given by $u_\nu(t) = \sum_{k=0}^{t-1} \Gamma_\nu^{t,k} w_\nu(k) + v_\nu(t)$, where $\Gamma_\nu^{t,k} = \diag(\Gamma_{i_1}^{t,k}, \ldots, \Gamma_{i_{|\nu|}}^{t,k})$, with $\Gamma_{i_j}^{t,k} \in \R^{m_{i_j} \times n_{i_j}}$ for $j \in \N_{[1,|\nu|]}$. Then, the decomposed dynamics of the agents in $\nu$ are
\begin{subequations}\label{eq:clique_decomposed_dynamics}
    \begin{align}  
    z_\nu(t+1) &= A_\nu z_\nu(t) + B_\nu v_\nu(t),\label{eq:clique_determin_dyn} \\
    e_\nu(t+1) &= A_\nu e_\nu(t) + \sum_{k=0}^{t-1} \Gamma_\nu^{t,k} w_\nu(k) + w_\nu(t).\label{eq:clique_error_dyn}
    \end{align}    
\end{subequations}
As systems in \eqref{eq:clique_determin_dyn} and \eqref{eq:clique_error_dyn} are decoupled, and in view of Proposition~\ref{prop:MAS_chance_constraint} below, we first synthesize the gains $\Gamma_\nu^{t,k}$, $k {\in} \N_{[0,t-1]}$, for $t {\in} \N_{[1,N)}$, to obtain tight prediction regions (see Definition \ref{def:PR}) for the trajectories of the error systems in \eqref{eq:clique_error_dyn}, which will be used to handle uncertainty in \eqref{eq:multi_agent_problem}--\eqref{eq:clique_chance_STL_constraints}, and then, design the feedforward elements in \eqref{eq:clique_determin_dyn} (see Sec. \ref{sec:stl_control_synthesis}). Next, we construct trajectory samples of the error system in \eqref{eq:clique_error_dyn} and use them for feedback synthesis and data-driven provable uncertainty quantification.



\begin{definition}\label{def:PR}
    Let $\bm{y}(a{:}b)\in \R^{n(b-a+1)}$ be a random process, with $y(t){\in} \R^n$, $t{\in}\N_{[a,b]}$. We call the ball $\mathbb{B}(q){\subseteq} \R^{n(b-a+1)}$ a prediction region (PR) of $\bm{y}(a{:}b)$ at probability level $1-\theta$, if $\prb{\bm{y}(a{:}b)\in \B(q)}\geq 1-\theta$.
\end{definition}


\begin{proposition}[~\cite{VlahakisCDC24}]\label{prop:MAS_chance_constraint}
Let $\bm{x}(0{:}N){=}\bm{z}(0{:}N)+\bm{e}(0{:}N)$, with $\bm{x}(0{:}N){=}(x(0),\ldots,x(N))$, $\bm{z}(0{:}N){=}(z(0),\ldots,z(N))$ and $\bm{e}(0:N)=(e(0),\ldots,e(N))$. Let $q>0$ be such that $\mathrm{Pr}\{\bm{e}(0:N)\in \mathbb{B}(q) \}\geq 1-\theta$. If $\bm{z}(0:N)+\bm{e}(0:N)\models \phi$ for all $\bm{e}(0:N)\in \mathbb{B}(q)$, then $\mathrm{Pr}\{\bm{x}(0)\models \phi\}\geq 1-\theta$.
\end{proposition}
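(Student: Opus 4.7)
The plan is to prove the proposition by a direct event-inclusion argument combined with monotonicity of probability. The essential observation is that the deterministic component $\bm{z}(0{:}N)$ is fixed (it is governed by \eqref{eq:clique_determin_dyn} driven by $v_\nu$ and the known initial condition), so all randomness in $\bm{x}(0{:}N)$ is carried by $\bm{e}(0{:}N)$.

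First, I would introduce the two events $\mathcal{A} := \{\bm{e}(0{:}N) \in \mathbb{B}(q)\}$ and $\mathcal{B} := \{\bm{x}(0{:}N) \models \phi\}$. By hypothesis, $\mathrm{Pr}(\mathcal{A}) \geq 1-\theta$. The second hypothesis of the proposition states that for every realization $\bm{e}(0{:}N) \in \mathbb{B}(q)$, the trajectory $\bm{z}(0{:}N) + \bm{e}(0{:}N)$ satisfies $\phi$. Since $\bm{x}(0{:}N) = \bm{z}(0{:}N) + \bm{e}(0{:}N)$ pointwise, this means that on every sample $\omega$ for which $\bm{e}(0{:}N)(\omega) \in \mathbb{B}(q)$, we have $\bm{x}(0{:}N)(\omega) \models \phi$. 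Hence $\mathcal{A} \subseteq \mathcal{B}$.

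Applying monotonicity of probability to this inclusion yields
\begin{equation*}
    \mathrm{Pr}\{\bm{x}(0{:}N) \models \phi\} = \mathrm{Pr}(\mathcal{B}) \geq \mathrm{Pr}(\mathcal{A}) \geq 1-\theta,
\end{equation*}
which is the desired conclusion.

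I do not anticipate any genuine obstacle here: the result is essentially a robust-reformulation lemma and the only subtlety worth flagging is measurability, namely that $\{\bm{x}(0{:}N) \models \phi\}$ is a measurable event. This follows because the STL robustness $\rho^{\phi}(\bm{x}(0{:}N))$ is a composition of $\min$, $\max$, and the predicate functions $\mu$ applied to the random vector $\bm{x}(0{:}N)$, and $\bm{x}(0{:}N) \models \phi$ is equivalent to $\rho^{\phi}(\bm{x}(0{:}N)) \geq 0$. A one-line remark to this effect, followed by the inclusion $\mathcal{A} \subseteq \mathcal{B}$ and the monotonicity step, completes the argument.
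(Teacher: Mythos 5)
Your proof is correct and follows essentially the same route as the paper: the paper phrases the argument via the law of total probability with $\mathrm{Pr}\{Y_x \mid Y_e\}=1$, which is just the conditional restatement of your event inclusion $\mathcal{A}\subseteq\mathcal{B}$ followed by monotonicity. Your added remark on measurability of the satisfaction event is a harmless extra the paper leaves implicit.
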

\begin{proof}
    Define events $Y_x:=\bm{x}(0{:}N)\models \phi$, $Y_e:=\bm{e}(0{:}N)\in \B(q)$, and let $Y_e^\prime$ be the complement of $Y_e$. From the law of total probability, it follows that $\mathrm{Pr}\{Y_x\}=\mathrm{Pr}\{Y_x|Y_e\}\mathrm{Pr}\{Y_e\}+\mathrm{Pr}\{Y_x|Y_e'\}\mathrm{Pr}\{Y_e'\}\geq 1-\theta$, since by assumption, $\mathrm{Pr}\{Y_x|Y_e\}=1$ and $\mathrm{Pr}\{Y_e\}\geq 1-\theta$, and $\mathrm{Pr}\{Y_x|Y_e'\}\mathrm{Pr}\{Y_e'\}\geq 0$.
\end{proof}

\subsection{Error trajectory samples}

Let matrices $\bm{A}_i$, $\bm{\Gamma}_i$, and $\bm{B}_i$, $i\in\cV$, be defined as
\begin{align}
\setlength{\arraycolsep}{.8pt}
    \begin{bmatrix}
        I_{n_i} & 0 & \cdots & 0\\
        A_i & I_{n_i} & \ddots & 0 \\
        \vdots & \vdots & \ddots & \vdots\\
        A_i^{N-1} & A_i^{N-2} & \cdots & I_{n_i}
    \end{bmatrix}{,} 
    &
    \setlength{\arraycolsep}{1.pt}
    \begin{bmatrix}
        0 & \cdots & \cdots & \cdots & 0 \vphantom{I_{n_i}}\\
        \Gamma_i^{1,0} & 0 & \cdots & \cdots & 0 \vphantom{I_{n_i}}\\
        \Gamma_i^{2,0} & \Gamma_i^{2,1} & 0 & \cdots & 0 \vphantom{I_{n_i}} \\
        \vdots & \vdots & \ddots & \ddots & \vdots \vphantom{I_{n_i}}\\
        \Gamma_i^{N-1,0} & \Gamma_i^{N-1,1} & \cdots & \Gamma_i^{N-1,N-2} & 0 \vphantom{I_{n_i}}
    \end{bmatrix}, \label{eq:gamma_i}
\end{align}
and $\bm{B}_i=\bm{A}_i(I_N \otimes B_i)$, respectively. Then, for the $\nu$th clique, with $\nu = (i_1, \ldots, i_{|\nu|})$, a dataset of aggregate error trajectory samples $\bm{e}_\nu^{(\varsigma)}(1{:}N) = (e_\nu^{(\varsigma)}(1), \ldots, e_\nu^{(\varsigma)}(N))$ can be constructed as
\begin{subequations}\label{eq:cal_traj_i}
    \begin{align}
        \cD^{e_\nu}&{=}\{\bm{e}_\nu^{(0)}(1{:}N),\ldots,\bm{e}_\nu^{(k)}(1{:}N)\}, \label{eq:cDe_cal} \\
       \bm{e}_\nu^{(\varsigma)}(1{:}N)&{=}(\bm{A}_\nu{+}\bm{B}_\nu\bm{\Gamma}_\nu)\bm{w}_\nu^{(\varsigma)}(0{:}N{-}1), \; \varsigma\in\N_{[0,k]}, \label{eq:ith_error_traj}
    \end{align}
\end{subequations} 
with $\bm{A}_\nu{=}\diag(\bm{A}_{i_1},\ldots,\bm{A}_{i_{|\nu|}})$, $\bm{B}_\nu{=}\diag(\bm{B}_{i_1},\ldots,\bm{B}_{i_{|\nu|}})$, and $\bm{\Gamma}_\nu{=}\diag(\bm{\Gamma}_{i_1},\ldots,\bm{\Gamma}_{i_{|\nu|}})$. In the following, we partition $\cD^{e_\nu}$ into $\cD^{e_\nu}_{\mathrm{train}}$ and $\cD^{e_\nu}_{\mathrm{cal}}$. The samples in $\cD^{e_\nu}_{\mathrm{train}}$ are generated from the disturbance samples in $\cD^{w_\nu}_{\mathrm{train}}$ and are linearly parameterized by the feedback gains $\bm{\Gamma}_\nu$, which facilitates both feedback synthesis and PR scaling over cliques, as detailed in Sec.~\ref{sec:training}. The samples in $\cD^{e_\nu}_{\mathrm{cal}}$ are constructed using $\cD^{w_\nu}_{\mathrm{cal}}$ and the trained feedback gains, and are used in Sec.~\ref{sec:calibration} to obtain PRs with probabilistic guarantees.

\subsection{Data-driven PR scaling and disturbance feedback design}\label{sec:training}

Here, we design the feedback gains $\bm{\Gamma}_\nu$, $\nu{\in}\cK_\phi$, using the training datasets $\cD^{e_\nu}_{\mathrm{train}}$. In view of the joint chance constraint in \eqref{eq:clique_chance_STL_constraints}, we also introduce scaling parameters $C_\nu$, $\nu{\in}\cK_\phi$, to weigh uncertainty across cliques, adjusting the size of PRs for the aggregate trajectories of the systems in \eqref{eq:clique_error_dyn}. To this end, we define nonconformity scores $E^{(\varsigma)}(C,\bm{\Gamma})$, $\varsigma{\in}\N_{[k_1+1,k]}$, parameterized by $\bm{\Gamma}{=}\{\bm{\Gamma}_\nu\}_{\nu\in\cK_\phi}$ and $C{=}\{C_\nu\}_{\nu\in\cK_\phi}$,
\begin{equation}\label{eq:nonconf_score_E_C_gamma}
    E^{(\varsigma)}(C,\bm{\Gamma}))=\max_{\nu\in\cK_\phi} \left(C_\nu\|\bm{e}_\nu^{(\varsigma)}(1{:}N)\|\right),
\end{equation}
with $\bm{e}_\nu^{(\varsigma)}(1{:}N)$ as defined as in \eqref{eq:ith_error_traj}. Motivated by \cite{Cleaveland2024} and our previous work \cite[Sec. IV]{VlahakisLCSS24} for single-agent settings, the feedback gains in $\bm{\Gamma}$ and weights in $C$ can be synthesized by minimizing $\quan{E^{(k_1+1)}(C,\bm{\Gamma}),\ldots,E^{(k)}(C,\bm{\Gamma})}{\hat{\theta}}$, requiring $0 \leq C_\nu \leq 1$, $\nu \in \cK_\phi$, and $\sum_{\nu \in \cK_\phi} C_\nu = 1$, with $\hat{\theta}=(1+\frac{1}{k-k_1-1})(1-\theta)$ (see \cite[Sec. 2.1]{lindemann2024CPsurvey} for details). To handle the complexity of this optimization, arising from both the quantile’s encoding and the multi-agent setup, we propose a tractable CVaR-based over approximation, becoming tighter as the number of samples increases \cite{Rockafellar2000CVaR}, as follows:
\begin{subequations}\label{eq:C_gamma_train_cvar}
\begin{align}
  P:= &\operatorname*{Minimize}_{\substack{\eta\geq 0,Y^{(\varsigma)},C,\bm{\Gamma}}} \eta+\frac{1}{q} \sum_{\varsigma=k_1+1}^k (Y^{(\varsigma)}-\eta)_+\; \mathrm{s.\;t.~} \label{eq:cvar_objective}\\
     &  Y^{(\varsigma)} {\geq} C_\nu\|\bm{e}_\nu^{(\varsigma)}(1{:}N)\|,\;  \forall\nu{\in} \cK_\phi,\;  \varsigma{\in}\N_{(k_1,k]}, \label{eq:max_operator_constraints} \\
     &  0\leq C_\nu\leq 1, \; \nu\in\cK_\phi,\; \sum_{\nu\in\cK_\phi}C_\nu=1,
\end{align}
\end{subequations}
where $q=(k{-}k_1{-}1)(1{-}\hat{\theta})$, and the variables $Y^{(\varsigma)}$ are introduced to address the max operator in \eqref{eq:nonconf_score_E_C_gamma} by the constraints in \eqref{eq:max_operator_constraints}. Next, we denote by $P(C)$ the optimization in \eqref{eq:C_gamma_train_cvar} for fixed weights in $C$ and by $P(\bm{\Gamma})$ the optimization in \eqref{eq:C_gamma_train_cvar} for fixed feedback gains in $\bm{\Gamma}$. Note that both $P(C)$ and $P(\bm{\Gamma})$ are convex. Algorithm \ref{alg:CGamma_train} summarizes an efficient procedure, known as a block-coordinate descent algorithm, for solving \eqref{eq:C_gamma_train_cvar} iteratively, which is a widely used heuristic performing well in practice for a small number of iterations.
\begin{algorithm}
\caption{Iterative procedure for solving \eqref{eq:C_gamma_train_cvar}}
\label{alg:CGamma_train}
\begin{algorithmic}[1]
\State Set $C^0{=}\{C_\nu{=}0 \; \textnormal{for}\; |\nu|{>}1,\;C_i{=}\frac{1}{M} \;\textnormal{for}\; i{\in}\cV\}$
\For{$\tau$ in $1:\tau_\mathrm{max}$}
\State Solve $P(C^{\tau-1})$ and return $\bm{\Gamma}^\tau$
\State Solve $P(\bm{\Gamma}^\tau)$ and return $C^\tau$. 
\EndFor
\State \textbf{return} $(C^\ast\leftarrow C^{\tau_\mathrm{max}},\;\bm{\Gamma}^\ast\leftarrow\bm{\Gamma}^{\tau_\mathrm{max}})$ \label{algo:return}
\end{algorithmic}
\end{algorithm}

\subsection{Error prediction regions}\label{sec:calibration}

Given the feedback gains in \( \bm{\Gamma}^\ast{ = }\{\bm{\Gamma}_\nu^\ast\}_{\nu\in\cK_\phi} \) and weights in \( C^\ast {=} \{ C_\nu^\ast \}_{\nu\in\cK_\phi} \), obtained via Algorithm \ref{alg:CGamma_train}, we now derive PRs for the aggregate error trajectories in cliques \( \nu \in \cK_\phi \) with the desired confidence level as follows.

\begin{proposition}\label{prop:prediction_regions}
Let $\bm{\Gamma}^\ast$ and $C^\ast$ contain the feedback gains and weights, respectively, obtained via Algorithm \ref{alg:CGamma_train}. Construct dataset $\cD^{e_\nu}_\mathrm{cal}$ as in \eqref{eq:cal_traj_i}, using $\cD^{w_\nu}_\mathrm{cal}$ and $\bm{\Gamma}^\ast$, and compute $E^{(\varsigma)}(C^\ast,\bm{\Gamma}^\ast){=}\max_{\nu\in\cK_\phi} \left(C_\nu^\ast\|\bm{e}_\nu^{(\varsigma)}(1{:}N)\|\right)$, for $\varsigma{\in}\N_{[0,k_1]}$,
and $q{=}\quan{E^{(1)}(C^\ast,\bm{\Gamma}^\ast),\ldots, E^{(k_1)}(C^\ast,\bm{\Gamma}^\ast),\infty}{1-\theta}$. 
Then, 
 \begin{align}
   \prb{\bm{e}^{(0)}_\nu(1{:}N) \in \B(\sfrac{q}{C_\nu^\ast}), \; \forall \nu\in\cK_\phi} \geq 1 - \theta. 
    \label{eq:chance_constrained_error_bounds}
\end{align}
\end{proposition}%
\begin{proof}
Since \( \{E^{(0)}(C^\ast,\bm{\Gamma}^\ast),\ldots, E^{(k_1)}(C^\ast,\bm{\Gamma}^\ast)\} \) is a set consisting of i.i.d. random variables, Lemma \ref{lemma:quantile_lemma} implies that  
\begin{equation}\label{eq:E0_prop}
    \prb{E^{(0)}(C^\ast,\bm{\Gamma}^\ast) \leq q} \geq 1-\theta.
\end{equation}  
By the definition of \( E^{(0)}(C^\ast,\bm{\Gamma}^\ast) \), this directly implies that $\prb{\max_{\nu\in\cK_\phi}\left(C_\nu^\ast\|\bm{e}_\nu^{(0)}(1{:}N)\|\right)\leq q} \geq 1-\theta$, or $\prb{\bm{e}_\nu^{(0)}(1{:}N)\in \B(\sfrac{q}{C_\nu^\ast}),\; \forall \nu\in\cK_\phi} \geq 1-\theta$, since the \(\max\) operator ensures that the bound holds uniformly for all \( \nu \in \cK_\phi \), completing the proof.
\end{proof}


\section{STL Control Synthesis}\label{sec:stl_control_synthesis}

Motivated by Prop.~\ref{prop:MAS_chance_constraint}, we will relax \eqref{eq:multi_agent_problem} into a deterministic problem by tightening STL constraints via our data-driven uncertainty quantification in Proposition \ref{prop:prediction_regions} and solve it through a scalable decomposition.

\subsection{STL Tightening}

We first tighten the robustness function of the formula in \eqref{eq:global_phi} using the Lipschitz constants of its subformulas and the PRs from Prop.~\ref{prop:prediction_regions}. The following two lemmas lead to the relaxed formulation in Theorem~\ref{thm:relaxed_prob}.


\begin{lemma}{\cite[Prop. 1]{KordabadTAC2025}}\label{lemma:lipschitz}  For any STL formula $\phi$ with Lipschitz continuous predicate functions, the robustness function $\rho^{\phi}(\bm{z}(0{:}N){+}\bm{e}(0{:}N))$ is Lipschitz continuous, with Lipschitz constant $L_{\phi}$ obtained as the maximum Lipschitz constant of the predicate functions appearing in $\phi$.
\end{lemma}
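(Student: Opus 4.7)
The plan is to proceed by structural induction on the STL syntax in \eqref{eq:STL_syntax}, showing that at every level the robustness function, viewed as a function of $\bm{e}(0{:}N)$ with $\bm{z}(0{:}N)$ regarded as a fixed parameter, is Lipschitz continuous with a constant that is bounded above by the maximum Lipschitz constant $L_\phi$ over the predicate functions appearing in the subformula. The key auxiliary fact I would state and use repeatedly is that if $f_1,\ldots,f_n:\mathbb{W}\to\R$ are Lipschitz with constants $L_1,\ldots,L_n$, then both $\min_j f_j$ and $\max_j f_j$ are Lipschitz with constant $\max_j L_j$; this follows from the elementary inequalities $|\min_j f_j(w_1)-\min_j f_j(w_2)|\leq \max_j |f_j(w_1)-f_j(w_2)|$, and similarly for $\max$.

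For the base case, let $\pi{=}(\mu(y)\geq 0)$ with $y$ being a subvector of $x(t)=z(t)+e(t)$. By Assumption \ref{ass:predicate}, $\mu$ is Lipschitz with some constant $L_\mu$. Since $\rho^\pi(\bm{x}(t))=\mu(x(t))$ depends on $\bm{e}(0{:}N)$ only through $e(t)$, and the projection $\bm{e}(0{:}N)\mapsto e(t)$ is $1$-Lipschitz in the Euclidean metric, the map $\bm{e}(0{:}N)\mapsto \rho^\pi(\bm{z}(0{:}N){+}\bm{e}(0{:}N))$ is Lipschitz with constant $L_\mu\leq L_\phi$.

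For the inductive step, I would treat each operator separately. Negation is immediate since $\rho^{\lnot\phi}=-\rho^{\phi}$ preserves the Lipschitz constant. For conjunction, the inductive hypothesis gives Lipschitz constants $L_{\phi_1},L_{\phi_2}\leq L_\phi$, and the min-preservation fact yields Lipschitz constant $\max(L_{\phi_1},L_{\phi_2})\leq L_\phi$. For the \textit{until} operator, the quantitative semantics is
\begin{equation*}
\rho^{\phi_1 U_{[a,b]}\phi_2}(\bm{x}(t)) = \max_{\tau\in t\oplus\N_{[a,b]}}\min\bigl(\rho^{\phi_1}(\bm{x}(\tau)),\min_{\tau'\in\N_{[t,\tau]}}\rho^{\phi_2}(\bm{x}(\tau'))\bigr),
\end{equation*}
which is a finite composition of $\min$ and $\max$ applied to robustness functions of the subformulas $\phi_1,\phi_2$ evaluated at shifted trajectories. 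Applying the inductive hypothesis to $\rho^{\phi_1}$ and $\rho^{\phi_2}$ and using the min/max-preservation fact iteratively yields a Lipschitz constant equal to $\max(L_{\phi_1},L_{\phi_2})\leq L_\phi$.

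The main obstacle I anticipate is purely bookkeeping for the \textit{until} operator: one must be careful that the inner $\min_{\tau'\in\N_{[t,\tau]}}$ and outer $\max_{\tau\in t\oplus\N_{[a,b]}}$ are taken over \emph{finite} index sets (which is guaranteed by the bounded interval $[a,b]$), so the min/max-preservation fact applies without any measurability or supremum subtleties. Since derived operators $\lor,\lozenge,\square$ are defined via negation, conjunction, and \textit{until}, and none of them introduces new Lipschitz constants, the induction closes and the overall Lipschitz constant of $\rho^\phi(\bm{z}(0{:}N){+}\,\cdot\,)$ is $L_\phi=\max\{L_\mu:\mu\text{ a predicate function in }\phi\}$, as claimed.
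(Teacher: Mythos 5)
Your proposal is correct. Note, however, that the paper itself gives no proof of this lemma: it is imported wholesale as a citation to \cite[Prop.~1]{KordabadTAC2025}, so there is no internal argument to compare against. What you have written is the standard structural-induction argument that underlies results of this type, and it is sound: the base case uses Assumption~\ref{ass:predicate} together with the fact that extracting the subvector $e(t)$ (or the relevant agent components of it) from $\bm{e}(0{:}N)$ is nonexpansive; negation trivially preserves the constant; and conjunction and \emph{until} reduce to finitely many $\min$/$\max$ operations, which preserve the maximum of the component Lipschitz constants via the elementary inequality you state. Your attention to the finiteness of the index sets in the \emph{until} case, and to the fact that the derived operators $\lor,\lozenge,\square$ introduce nothing new, closes the induction cleanly; the conclusion is an upper bound $L_\phi$ on the Lipschitz constant, which is exactly what the lemma asserts and what Theorem~\ref{thm:relaxed_prob} needs. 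The only point worth making explicit is the choice of norm: the projection step in your base case (and the implicit time-shift step in the \emph{until} case) is $1$-Lipschitz provided the norm on $\bm{e}(0{:}N)$ dominates the norm of any of its subvectors, which holds for the Euclidean and $\infty$-norms used elsewhere in the paper (e.g., the example computes $L_{\phi_\nu}$ with respect to $\|\cdot\|_\infty$), so the predicate constants and the trajectory-level constant should be understood with respect to a consistent such norm. This is a bookkeeping remark, not a gap.
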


\begin{lemma}\label{lemma:tightening_single_formula}
Let trajectory $\bm{x}(0{:}N){=}\bm{z}(0{:}N){+}\bm{e}(0{:}N)$, with $\prb{\bm{e}(0{:}N){\in} \B(q/C)}{\geq} 1{-}\theta$, and consider STL formula $\phi$. Let $L_\phi$ be the Lipschitz constant of the robustness function $\rho^{\phi}(\bm{z}(0{:}N){+}\bm{e}(0{:}N))$. If $\rho^\phi(\bm{z}(0{:}N)){\geq} L_\phi   \frac{q}{C}$,  then $\prb{\rho^\phi(\bm{x}(0{:}N)){\geq} 0}{\geq }1{-}\theta$.  
\end{lemma}
\begin{proof}
    By assumption $\prb{\|\bm{e}(0{:}N)\|\leq \frac{q}{C}}\geq 1-\theta$. Also, by the Lipschitz condition, it holds that  
    \begin{align*}
    &\mathrm{Pr} \left\{ \left| \rho^\phi(\bm{z}(0{:}N) + \bm{e}(0{:}N)) - \rho^\phi(\bm{z}(0{:}N)) \right| \right. \\
    &\quad \left. \leq L_\phi \left\| \bm{e}(0:N) \right\| \leq L_\phi   \frac{q}{C} \right\} \geq 1-\theta.
\end{align*}
    Focusing on the case where $\rho^\phi(\bm{z}(0{:}N)+\bm{e}(0{:}N))\leq\rho^\phi(\bm{z}(0{:}N))$, the previous Lipschitz inequality yields $$\prb{\rho^\phi(\bm{x}(0{:}N))\geq \rho^\phi(\bm{z}(0{:}N))-L_\phi   \frac{q}{C}\geq 0}\geq 1-\theta,$$ since $\rho^\phi(\bm{z}(0{:}N))\geq L_\phi   \frac{q}{C}$, completing the proof.
\end{proof}

\subsection{Relaxation of the chance-constrained synthesis problem}

We now present a data-driven relaxation of \eqref{eq:multi_agent_problem}.

\begin{theorem}\label{thm:relaxed_prob}
    Let $L_{\phi_\nu}$, $\nu\in\cK_\phi$, be the Lipschitz constants of the robustness functions of formulas $\phi_\nu$, $\nu{\in}\cK_\phi$. Let also test aggregate error trajectories $\bm{e}^{(0)}_\nu(0{:}N)$, $\nu{\in}\cK_\phi$, with $e_\nu(0)=0$, satisfy \eqref{eq:chance_constrained_error_bounds}, and aggregate deterministic trajectories $\bm{z}_\nu(0{:}N)$, $\nu{\in}\cK_\phi$, with $z_\nu(0)=x_\nu(0)$, driven by feedforward controllers $\bm{v}_\nu(0{:}N{-}1)$, $\nu{\in}\cK_\phi$, be solution to:
    \begin{align}
    \operatorname*{Minimize}_{\substack{\bm{v}(0),\; \bm{z}(0)}} \;&\sum_{i=1}^M\left(\sum_{t=0}^{N-1}(\ell_i(z_i(t), v_i(t))) + V_{f,i}(z_i(N))\right)
         \nonumber \\
     \mathrm{subject\;to~} &  z(t+1) = Az(t)+Bv(t),\; t\in\N_{[0,N)}, \nonumber \\
         &  \rho^{\phi_\nu}(\bm{z}_\nu(0{:}N))\geq L_{\phi_\nu}  \frac{q}{C_\nu^\ast}, \; \nu\in \cK_\phi. \label{eq:multi_agent_problem_tight}
    \end{align}   
    Then, $\prb{\bm{x}^{(0)}_\nu(0{:}N)\models \phi_\nu,\; \forall \nu\in\cK_\phi}\geq 1-\theta$, with $\bm{x}^{(0)}_\nu(0{:}N)=\bm{z}_\nu(0{:}N)+\bm{e}_\nu^{(0)}(0{:}N)$, $\nu\in\cK_\phi$.
\end{theorem}
\begin{proof}
    The result follows from the condition in~\eqref{eq:chance_constrained_error_bounds}, 
which can be equivalently written as 
\(\Pr \big\{ \|\bm{e}_\nu(0{:}N)\| \leq \sfrac{q}{C_\nu^\ast}, \; \nu \in \mathcal{K}_\phi \big\} \geq 1{-}\theta\), 
which, by Lemma~\ref{lemma:tightening_single_formula} and the feasibility of the problem in~\eqref{eq:multi_agent_problem_tight}, 
implies \(\Pr\{ \rho^{\phi_\nu}(\bm{x}^{(0)}_\nu(0{:}N)) \geq 0, \; \nu \in \mathcal{K}_\phi \} \geq 1{-}\theta\), or, equivalently, \(\Pr(\bm{x}^{(0)}_\nu(0{:}N) \models \phi_\nu, \; \nu \in \mathcal{K}_\phi ) \geq 1{-}\theta\), which completes the proof.
\end{proof}


\subsection{Distributed control synthesis}

To handle STL constraints in \eqref{eq:multi_agent_problem_tight}, one may use existing methods based on integer programming or log-sum-exp approximations \cite{Donze2015}. We defer details to an extended version. Next, we propose an agent-level decomposition of \eqref{eq:multi_agent_problem_tight} and an iterative procedure to manage multi-agent complexity.
  

\subsubsection{Agent-level decomposition} For agent $i$ participating in at least one clique, i.e., $i\in\nu$, with $\nu\in\cK_\phi$, let $\cT_i = \{\nu \in \cK_\phi \mid i \in \nu \}$ be the set of cliques containing $i$. 
Then, a multi-agent STL formula equivalent to the original one in \eqref{eq:global_phi} is defined as $\hat{\phi}=\bigwedge_{i\in \cV}\hat{\phi}_i$, where $\hat{\phi}_i=\bigwedge_{\nu_i\in\cT_i}\phi_{\nu_i}$. In the following, we denote $\varrho^{\phi_{\nu_i}}(\bm{z}_{\nu_i}(0{:}N)){=}\rho^{\phi_{\nu_i}}(\bm{z}_{\nu_i}(0{:}N)){-}L_{\phi_{\nu_i}} \frac{q}{C_{\nu_i}}$, $\nu_i\in\cT_i$, and introduce the agent-level problems $P_i^0$ and $P_i^t$, $t\in\N_{[1,N]}$, for the $i$th agent, where
\begin{subequations}\label{eq:initial_local_problem}
\begin{align}
    P_i^0&:=\operatorname*{Minimize}_{\substack{\bm{v}_i^0, \bm{z}_i^0}}\;\cL_i(\bm{z}_i^0,\bm{v}_i^0)\; \mathrm{subject\;to} \label{eq:cost_intro} \\
     & \;\;\;\;\;\; z_i^0(k+1) {=} A_iz_i^0(k){+}B_iv_i^0(k), \; k\in\N_{[0,N)}, 
     \label{eq:local_STL_dynamics}  
     \\ 
     &\;\;\;\;\;\; \varrho^{\phi_i}(\bm{z}_i^0)\geq 0, \; \mathrm{with}\; z_i^0(0)=x_i(0), \label{eq:local_STL_initial}  
\end{align}    
\end{subequations}
and
\begin{subequations}\label{eq:kth_local_problem}
\begin{align}
    P_i^t&:=\operatorname*{Minimize}_{\substack{\bm{v}_i^t, \bm{z}_i^t}}\; \cL_i(\bm{z}_i^t,\bm{v}_i^t)-\Omega_i\mu_{\nu_t}^t \; \mathrm{subject\;to~} \label{eq:local_problem_kth_iter} \\
     & \;\;\;\;\;\; z_i^t(k+1) = A_iz_i^t(k)+B_iv_i^t(k), \; k\in\N_{[t,N)},\\ 
     & \;\;\;\;\;\; \varrho^{\phi_i}(\bm{z}_i^t)\geq 0, \; \mathrm{with}\; z_i^t(t)=x_i(t), \label{eq:kth_agent_i_local_task}\\  
     &\;\;\;\;\;\; \varrho^{\phi_{\nu_t}}(\bm{z}_{\nu_t}^t) {\geq} \mu_{\nu_t}^t, \; \nu_t {=} \argmin_{\nu\in \cT_i, \; |\nu|>1} \{\varrho^{\phi_{\nu}}(\bm{z}_{\nu}^{t-1})\}, \label{eq:least_violating_joint_task_a}\\
     &\;\;\;\;\;\; \mu_{\nu_t}^t\geq \min\left(0,\varrho^{\phi_{\nu_t}}(\bm{z}_{\nu_t}^{t-1})\right), \label{eq:least_violating_joint_task_b}\\
     &\;\;\;\;\;\; \varrho^{\phi_{\nu}}(\bm{z}_{\nu}^t){\geq} \min\left(0,\varrho^{\phi_{\nu}}(\bm{z}_{\nu}^{t-1})\right), \forall \nu {\in} \cT_i{\setminus} \{\nu_t,i\} \label{eq:robust_joint_tasks}
\end{align}    
\end{subequations}
where $\Omega_i\gg 0$, $z_i^t(\tau)$ denotes the prediction of $x_i(\tau)$ carried out at time $t$, $\varrho^{\phi_{\nu}}(\bm{z}_{\nu}^t)$ is the tightened robustness function of the formula $\phi_{\nu}$, $\nu\in\cT_i$, evaluated over the trajectory $\bm{z}_{\nu}^t$, and
\begin{align}
    \bm{z}_\nu^t&{=}(x_\nu(0),...,x_\nu(t{-}1),z_\nu^t(t),...,z_\nu^t(N)),\\
    \bm{v}_i^t&{=}(v_i(0),...,v_i(t{-}1),v_i^t(t),...,v_i^t(N{-}1)),\\
    \cL_i(\bm{z}_i^t,\bm{v}_i^t)&{=}\sum_{k=0}^{N-1}\ell_i(z_i^t(k),v_i^t(k)){+}V_{f,i}(z_i^t(N)).
\end{align}

$P_i^0$ and $P_i^t$ decompose the centralized problem in \eqref{eq:multi_agent_problem_tight}: $P_i^0$ is solved by all agents at $t=0$, while for $t \geq 1$, $P_i^t$ is solved by a subset of agents. Details are in \cite{VlahakisCDC24}, where these subsets are scheduled offline in a centralized way, which we now extend to distributed coordination.

\subsubsection{Distributed MPC-like implementation} Let 
\begin{align}
    \bm{z}_i^t(x_i(t),\bm{v}_i^t){=}(x_i(0),...,x_i(t),z_i^t(t+1),...,z_i^t(N)),
\end{align}
denote a trajectory where the last $N{-}t$ states are generated by the last $N{-}t$ inputs of $\bm{v}_i^t$ starting from $x_i(t)$. For $t > 1$, the $i$th agent computes $\rho^{\phi_{\nu_i}}(\bm{z}_{\nu_i}^{t-1})$, $\nu_i \in \cT_i$, and estimates the robustness function of $\hat{\phi}_i$ as $r_i^t=\min_{\nu_i \in \cT_i}\left( \varrho^{\phi_{\nu_i}}(\bm{z}_{\nu_i}^{t-1}) \right)$. Agent-$i$ communicates $r_i^t$ and $\bm{z}_i^t(x_i(t),\bm{v}_i^{t-1})$ to all agents in $\nu_i$, $\nu_i \in \cT_i$, and receives their corresponding information. Agent-$i$ either solves $P_i^t$ if $r_i^t =\min_{j\in\nu_i,\;\nu_i\in\cT_i}\left( r_j^t\right)$ or retains its input sequence from $t{-}1$, that is $\bm{v}_i^t{=}\bm{v}_i^{t-1}$. Alg.~\ref{alg:control_syn} summarizes this distributed strategy, and Prop.~\ref{prop:mpc} highlights its benefits. 

\begin{proposition}\label{prop:mpc}
    Suppose each agent-$i$ solves $P_i^0$ at $t{=} 0$ and executes Alg. \ref{alg:control_syn} from $t{\geq}1$. Let $\cO_t{\subset} \cV$ collect indices of agents solving $P_i^t$ at $t {\geq} 1$, and assume $P_i^0$, $i\in\cV$, $P_i^t$, with $i{\in}\cO_t$ for $t{\geq} 1$, are feasible. It holds: 1) if $i,j{\in}\cO_t$ for some $1 {\leq} t {\leq} N$, then $\cT_i{\cap} \cT_j{=}\emptyset$, 
    and 2) collaborative tasks $\phi_\nu$, $\nu\in\cK_\phi$, are minimally violated or satisfied with probability at least $1{-}\theta$.
\end{proposition}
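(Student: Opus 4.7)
The plan is to establish the three items in turn, relying on the selection rule in Algorithm~\ref{alg:control_syn}, the stated feasibility hypothesis, and the probabilistic bound from Theorem~\ref{thm:relaxed_prob}.

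First I would prove (a) by contradiction. Suppose there exist distinct $i,j\in\cO_t$ with a shared clique $\nu\in\cT_i\cap\cT_j$. The activation rule gives $r_i^t=\min_{k\in\bigcup_{\nu'\in\cT_i}\nu'}r_k^t$ and $r_j^t=\min_{k\in\bigcup_{\nu'\in\cT_j}\nu'}r_k^t$. Since $\nu\in\cT_i\cap\cT_j$ implies $j\in\bigcup_{\nu'\in\cT_i}\nu'$ and $i\in\bigcup_{\nu'\in\cT_j}\nu'$, both $r_i^t\leq r_j^t$ and $r_j^t\leq r_i^t$ hold, yielding $r_i^t=r_j^t$. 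Under a deterministic tie-breaking convention (e.g., smallest agent index within a tied clique), only one of $i,j$ is allowed to activate, contradicting $i,j\in\cO_t$. The resulting disjointness guarantees that the collaborative constraints in $P_i^t$ and $P_j^t$ cannot simultaneously target the same clique at step $t$.

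Second, for (b) I would argue by induction on $t$. At $t=0$, the feasibility of $P_i^0$ enforces $\varrho^{\phi_i}(\bm{z}_i^0)\geq 0$. For $t\geq 1$, if $i\in\cO_t$ then $P_i^t$ enforces $\varrho^{\phi_i}(\bm{z}_i^t)\geq 0$ through \eqref{eq:kth_agent_i_local_task}; otherwise the update $\bm{v}_i^t=\bm{v}_i^{t-1}$ together with $z_i^t(\tau)=z_i^{t-1}(\tau)$ for $\tau\geq t$ preserves the nominal robustness on the singleton clique. Since $\{i\}\in\cK_\phi$, Proposition~\ref{prop:prediction_regions} furnishes $\|\bm{e}_i(1{:}N)\|\leq q/C_i^\ast$ with probability at least $1-\theta$, and Lemma~\ref{lemma:tightening_single_formula} then lifts the nominal satisfaction to $\prb{\rho^{\phi_i}(\bm{x}_i(0{:}N))\geq 0}\geq 1-\theta$, which is exactly the statement of Theorem~\ref{thm:relaxed_prob} restricted to $\phi_i$.

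Third, for (c) I would combine the monotonicity built into \eqref{eq:least_violating_joint_task_a}--\eqref{eq:robust_joint_tasks} with the PR-based tightening. At each time $t$, the constraint \eqref{eq:robust_joint_tasks} forces every collaborative $\nu\in\cT_i$ (other than the selected $\nu_t$) to retain a nominal robustness no worse than $\min(0,\varrho^{\phi_\nu}(\bm{z}_\nu^{t-1}))$, while the slack $\mu_{\nu_t}^t$ on the least-robust task is driven upward by the penalty $-\Omega_i\mu_{\nu_t}^t$ in \eqref{eq:local_problem_kth_iter} with lower bound \eqref{eq:least_violating_joint_task_b}. Part (a) ensures that the concurrent activations do not contend for the same clique, so these monotonicity bounds propagate consistently across the horizon. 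Conditioning on the event $\{\|\bm{e}_\nu(1{:}N)\|\leq q/C_\nu^\ast,\ \forall \nu\in\cK_\phi\}$, whose probability is at least $1-\theta$ by Proposition~\ref{prop:prediction_regions}, a uniform application of Lemma~\ref{lemma:tightening_single_formula} transports each nominal bound onto the realized trajectory, delivering the ``minimally violated or satisfied with probability at least $1-\theta$'' conclusion.

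The main obstacle is the bookkeeping in (c): one must verify that, at every time $t$, each collaborative clique admits a feasible nominal update that is no worse than its previous robustness, and that this property is preserved when different active agents propose updates asynchronously. Part (a) is the key structural ingredient that makes this argument well posed, while (b) reduces to a direct application of the tightening result once the individual-task constraint is shown to be enforced either by $P_i^t$ or by the fallback $\bm{v}_i^t=\bm{v}_i^{t-1}$.
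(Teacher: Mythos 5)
Your overall route matches the paper's: (a) by contradiction from the selection rule, (b) from feasibility of the tightened agent-level constraint plus Prop.~\ref{prop:prediction_regions}, and (c) from the monotonicity constraints \eqref{eq:least_violating_joint_task_a}--\eqref{eq:robust_joint_tasks} combined with the PR-based tightening and Lemma~\ref{lemma:tightening_single_formula}. However, two steps deviate from the algorithm as actually defined. In (a), the activation rule in Alg.~\ref{alg:control_syn} is \emph{strict}: agent $i$ solves $P_i^t$ only if $r_i^t<r_j^t$ for all $j\in\nu_i$, $\nu_i\in\cT_i$. Your argument recasts this as $r_i^t=\min_k r_k^t$ (weak), derives $r_i^t=r_j^t$, and then resolves the tie with a deterministic tie-breaking convention that the algorithm does not contain. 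Under the actual rule no such convention is needed: if $i$ and $j$ share a clique, simultaneous activation would require $r_i^t<r_j^t$ and $r_j^t<r_i^t$, an immediate contradiction (and in a tie neither agent activates), which is precisely the paper's one-line argument. As written, your proof of (a) rests on an assumption external to the algorithm.

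In (b), your inductive fallback step asserts that a non-active agent keeps $\bm{v}_i^t=\bm{v}_i^{t-1}$ \emph{and} $z_i^t(\tau)=z_i^{t-1}(\tau)$ for $\tau\geq t$, so that nominal robustness is preserved. That is not what Alg.~\ref{alg:control_syn} does: the retained trajectory is $\bm{z}_i^t(x_i(t),\bm{v}_i^{t-1})$, re-initialized at the \emph{measured} state $x_i(t)$, which differs from the previous prediction $z_i^{t-1}(t)$ by the realized disturbance/error contribution, so the nominal robustness value generally changes and ``preservation'' does not follow. The paper does not claim preservation; it leans directly on the proposition's feasibility hypotheses (recursive feasibility) together with the tightened constraint \eqref{eq:kth_agent_i_local_task} and Prop.~\ref{prop:prediction_regions}. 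Your treatment of (c) is consistent with the paper's (improvement of the least-robust joint task via \eqref{eq:least_violating_joint_task_a}--\eqref{eq:least_violating_joint_task_b}, non-degradation of the others via \eqref{eq:robust_joint_tasks}, probability via the tightening), so apart from these two points the proposal follows the paper's argument.
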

\begin{proof}
    1) By assumption, at \( t \geq 1 \), \( r_i^t < r_l^t \) for all \( l \in \nu_i \), where \( \nu_i \in \mathcal{T}_i \), and \( r_j^t < r_s^t \) for all \( s \in \nu_j \), where \( \nu_j \in \mathcal{T}_j \). Assuming without loss of generality that \( r_i < r_j \), if \( \exists \nu \in \mathcal{T}_i \cap \mathcal{T}_j \), then \( j \notin \mathcal{O}_t \), which contradicts the assumption.
    2) This follows from the feasibility of the constraints in \eqref{eq:least_violating_joint_task_a}--\eqref{eq:least_violating_joint_task_b}, which ensures the improvement of the most violated (least robust) joint task \( \phi_{\nu_t} \), and from the feasibility of the constraint in \eqref{eq:robust_joint_tasks}, which ensures non-violation or improvement of other joint tasks. 
    The probability guarantee follows from the tighter robustness functions and Prop. \ref{prop:prediction_regions}.
\end{proof}

\begin{algorithm}
\caption{Distributed STL control of agent-$i$}
\label{alg:control_syn}
\begin{algorithmic}[1]
\For{$t$ in $1:N$}
    \State \textbf{Compute} $r_i^t=\min_{\nu_i \in \cT_i}\left( \varrho^{\phi_{\nu_i}}(\bm{z}_{\nu_i}^{t-1}) \right)$ 
    \State \textbf{Communicate} $r_i^t$, $\bm{z}_i^t(x_i(t),\bm{v}_i^{t-1})$ to $j\in\nu_i$, $\nu_i\in\cT_i$
    \State \textbf{Receive} $r_j^t$, $\bm{z}_j^t(x_i(t),\bm{v}_i^{t-1})$ from  $j\in\nu_i$, $\nu_i\in\cT_i$
    
    \If{$r_i^t<r_j^t$ for all $j\in\nu_i$, $\nu_i\in\cT_i$}
        \State \textbf{Solve} $P_i^t$ and store $(\bm{v}_i^t,\bm{z}_i^t)$  
    \Else
        \State \textbf{Update} $\bm{v}_i^t\leftarrow\bm{v}_i^{t-1}$ and $\bm{z}_i^t\leftarrow \bm{z}_i^t(x_i(t),\bm{v}_i^{t-1})$
    \EndIf

    \State \textbf{Apply} $u_i(t)=\sum_{k=0}^{t-1}\Gamma_i^{t,k}w_i(k)+v_i^t(t)$
\EndFor
\end{algorithmic}
\end{algorithm}

\section{Example}\label{sec:example}

We revisit the 10-agent setup from \cite[Sec. IV]{VlahakisCDC24}, where agents with single-integrator dynamics are assigned individual and collaborative STL tasks under Gaussian disturbances $w_i(t) \sim \cN(0, 0.05I_2)$, $i \in \N_{[1,10]}$. We refer to \cite[Sec. IV]{VlahakisCDC24} for detailed description. Each agent must visit regions $T_i$ and $G_i$ within $\N_{[10,50]}$ and $\N_{[70,100]}$, respectively, while avoiding obstacles in $\cX$. Collaborative tasks require clique members to rendezvous within $\N_{[0,100]}$. We recall the clique set \(
\cK_\phi = \{(1),\ldots,(10), (1,2,3), (1,5), (3,4), (4,5), (5,6), (4,7), (6,\allowbreak8), (6,9), (7,8), (8,10), (9,10)\}.
\) The STL specification spans 100 steps and must be satisfied with probability $95\%$. 

\textbf{PR-scaling and feedback design:} We generate 100 training and 100 calibration disturbance sequences (length 100) and run Alg.~\ref{alg:CGamma_train} for $\tau_{\max} = 4$ to compute $(C^\ast, \bm{\Gamma}^\ast)$ using GUROBI. Average solve times for $P(C)$ and $P(\bm{\Gamma})$ are under 1 sec and 20 min, respectively (Intel i7, 32 GB RAM). Using the calibrated disturbance data and Prop.~\ref{prop:prediction_regions}, we compute PRs $\B_\infty(q/C^\ast_\nu)$ with $0.7398 \leq q/C^\ast_\nu \leq 0.8985$, $\nu \in \cK_\phi$.

\textbf{Distributed STL synthesis:} To meet the $95\%$ satisfaction target, we tighten the underlying robustness functions using Thm.~\ref{thm:relaxed_prob} and Prop.~\ref{prop:prediction_regions}, with Lipschitz constants ranging from $0.046$ to $1$ across cliques (Lemma~\ref{lemma:lipschitz}). We solve $P_i^0$ for all $i$ and apply Alg.~\ref{alg:control_syn}, where agents iteratively solve $P_i^t$, $t \in \N_{[1,100]}$, exchanging estimates of robustness functions within cliques. Fig.~\ref{fig:trajectories} shows results from one experiment. Out of 100 runs, 97 satisfied the STL formula. Fig.~\ref{fig:bounds_hist} (right) shows agent selection frequency during coordination. Our code is available at \cite{vlahakis2025cdc_github}.

\textbf{Comparison with \cite{VlahakisCDC24}:} The method in \cite{VlahakisCDC24} relies on a union-bound-based uncertainty quantification providing conservative PRs for the error trajectories, which limits the specified satisfaction probability to $70\%$. In contrast, the proposed data-driven approach yields considerably tighter PRs with high confidence, depending on the number of available samples. Fig.~\ref{fig:bounds_hist} (left) compares the $70\%$ $l_2$ PR (red circle) from \cite{VlahakisCDC24} with the data-driven $95\%$ $l_\infty$-based PR obtained using $10^4$ new calibration samples and a tightened $97\%$ quantile, guaranteeing over $99.9999\%$ confidence $1-\beta$ (see Rem.~\ref{rem:CP}). Unlike the offline design in \cite{VlahakisCDC24}, the new method enables fully distributed coordination via Alg.~\ref{alg:control_syn}.

\begin{figure}
    \centering
    \input{figures/10agents_traj_3April}
    \caption{Simulation under Alg.~\ref{alg:control_syn} with $x_i(0)$ as crosses, $T_i$ areas as diamonds, $G_i$ areas as boxes, and 3 brown obstacles.}
    \label{fig:trajectories}
\end{figure}

\begin{figure}[ht]
    \centering
    \begin{minipage}[b]{0.53\linewidth}
        \centering
        \input{figures/error_bounds_CDC24_25_compare}
    \end{minipage}
    \hfill
    \begin{minipage}[b]{0.43\linewidth}
        \centering
%
%
\definecolor{mycolor1}{rgb}{0.20000,0.60000,0.80000}%

\begin{tikzpicture}
[font=\tiny]
\begin{axis}[%
width=0.81in,
height=0.8in,
at={(0,0)},
scale only axis,
bar shift auto,
xmin=-0.2,
xmax=11.2,
xtick={ 1,  2,  3,  4,  5,  6,  7,  8,  9, 10},
xlabel style={font=\small, color=white!15!black, yshift=0.2cm},
xlabel={Agents' indices},
ymin=0,
ymax=18,
ylabel style={font=\small, color=white!15!black, yshift=-0.25cm},
ylabel={Counts},
axis background/.style={fill=white},
title style={font=\bfseries},
xmajorgrids,
ymajorgrids
]
\addplot[ybar, bar width=0.2, fill=red, draw=red, area legend] table[row sep=crcr] {%
1 9\\
2 13\\
3 11\\
4 16\\
5 15\\
6 16\\
7 17\\
8 16\\
9 10\\
10 13\\
};
\addplot[forget plot, color=white!15!black] table[row sep=crcr] {%
-0.2	0\\
11.2	0\\
};
\end{axis}
\end{tikzpicture}%
    \end{minipage}
    \caption{Left: Red circle shows the PR for $\bm{e}_i(t)$ from \cite{VlahakisCDC24}, computed via a union bound with $70\%$ probability; blue box denotes the proposed $95\%$ $\ell_\infty$-based PR with confidence $1{-}\beta{\gg} 99.99\%$ (see Prop.~\ref{prop:prediction_regions} and Rem.~\ref{rem:CP}). 
Right: Histogram of agents during the experiment of Fig.~\ref{fig:trajectories} under Alg.~\ref{alg:control_syn}.
}
    \label{fig:bounds_hist}
\end{figure}

%

\section{Conclusion}\label{sec:concl}

We propose distribution-free control synthesis for stochastic linear MAS under chance-constrained collaborative STL specifications. Exploiting linearity, we decompose each agent’s dynamics into deterministic and error components, following a disturbance-feedback (DF) and feedforward control policy. We design DF via CVaR-based optimization on training samples and use conformal prediction on calibration data to quantify prediction regions for error trajectories, enabling us to address chance constraints through Lipschitz tightening. This relaxation yields a centralized deterministic problem, whose solution provides the feedforward inputs. To improve scalability, we decompose it into agent-level MPC subproblems, resulting in a distributed control architecture. Compared with \cite{VlahakisCDC24}, the approach provides tighter uncertainty bounds and enhances scalability via distributed coordination.

\balance

\bibliographystyle{IEEEtran}

\bibliography{biblio}

\end{document}